\documentclass[submission, Phys]{SciPost}

\def\supp{\text{supp}}
\usepackage{tikz}
\tikzset{node distance=2cm, auto}
\usetikzlibrary{arrows}
\usetikzlibrary{calc}

\newcommand{\be}[0]{\begin{equation}}
\newcommand{\ee}[0]{\end{equation}}

\numberwithin{equation}{section}
\newcommand{\beq}{\begin{eqnarray}}
\newcommand{\eeq}{\end{eqnarray}}

\usepackage{amsthm}
\theoremstyle{definition}

\theoremstyle{lemma}
\newtheorem*{lemma}{Lemma}
\theoremstyle{theorem}
\newtheorem*{theorem}{Theorem}
\theoremstyle{proposition}

\newtheorem*{corollary}{Corollary}
\theoremstyle{remark}

\binoppenalty=10000
\relpenalty=10000

\hypersetup{
    colorlinks,
    linkcolor={red!50!black},
    citecolor={blue!50!black},
    urlcolor={blue!80!black}
}
\usepackage[bitstream-charter]{mathdesign}
\urlstyle{sf}
\DeclareSymbolFont{usualmathcal}{OMS}{cmsy}{m}{n}
\DeclareSymbolFontAlphabet{\mathcal}{usualmathcal}

\begin{document}


\begin{center}{\Large \textbf{
Nonlocal Conformal Field Theory}}\end{center}
\begin{center}
Bora Basa\textsuperscript{1*}, Garbrile La Nave\textsuperscript{2}  and Philip Phillips\textsuperscript{1}
\end{center}

\begin{center}
{\bf 1}
Department of Physics and Institute for Condensed Matter Theory,
University of Illinois
1110 W. Green Street, Urbana, IL 61801, U.S.A.\\
{\bf 2} Department of Mathematics, University of Illinois, Urbana, Il. 61801
\\

${}^*$ {\small \sf basa2@illinois.edu}
\end{center}

\begin{center}
\today
\end{center}


\section*{Abstract}
{\bf
Using the recently developed notion of a fractional Virasoro algebra, we explore the implied operator product expansions in nonlocal conformal field theories and their geometric meaning. We probe the interplay between classical nonlocality in the functional-analytic sense and quantization in a two-dimensional setting and find that nonlocal quantum dynamics realize this fractional Virasoro algebra exclusively with a state dependent central charge. Notably, we prove that the widely studied free Gaussian fixed points with a fractional Laplacian kinetic term does not fit this criterion but that the RG flow associated non-Gaussian fixed points do. 
}

\noindent\rule{\textwidth}{1pt}

\section{Introduction and summary}

While locality in some form is usually a taken to be a key tenet of field theory, there are numerous settings in which nonlocality is either realized explicitly or is emergent. A mathematically precise example of the former comes from a physical interpretation of the Caffarelli-Silvestre (CS) extension theorem~\cite{caffarelli_extension_2007}: Local bulk dynamics can have corresponding boundary dynamics governed by a nonlocal operator, the Laplacian raised to a non-integer power. 


Quite generally, the fractional Laplacian $(-\Delta)^\gamma$ (or its conformal extension, the Panietz operator\cite{graham,gz,chang}) on a function $f$ in ${\mathbb R}^n$ provides a  Dirichlet-to-Neumann map for a function $\phi$ in ${\mathbb R^{n+1}}$ that satisfies a \emph{local} second-order elliptic differential equation. Paulos, et al.~\cite{paulos_conformal_2016}, simultaneous with ours~\cite{gl2}, noted that the extension theorem allows one to equate the action of a free massive theory in $d+1$ dimensions in a spacetime such as AdS with a $d-$dimensional nonlocal one given by the first term in Eq.~\eqref{nonlocalS}. That is, the CS extension theorem provides another way of understanding the precise way in which conformal invariance arises in the AdS/CFT conjecture~\cite{gubser1,Witten1998}. 

In the realm of statistical physics, which will furnish a concrete example for the framework of nonlocal CFT we will develop, the long-range Ising (LRI) model,
\begin{equation}
H=-J\sum_{i,j} s_is_j/r_{ij}^{d+\gamma},
\end{equation}
where the sum is pairwise over all site,  has nonlocal kinetic energy operators appearing in the corresponding field theory
\begin{equation}\label{nonlocalS}
S=\int d^dx \left(\frac12\Phi(x)(-\Delta)^{\gamma}\Phi(x)+\frac{g}{4!}\Phi(x)^4\right).
\end{equation}
 This form of the action is immediate, given
\begin{equation}\label{pseudo}
(-\Delta)^\gamma f(x)=C_{d,\gamma}\int d^dx'\frac{f(x)-f(x')}{|x-x'|^{d+2\gamma}}.
\end{equation}
The IR physics associated to this continuum model has received a great deal of attention in past decades. As we will review in Sec.~\ref{sec:LRI}, the fractional/nonlocal $\phi^4$ theory flows to either the trivial fractional Gaussian fixed point or to a fixed point in either the long-range or short-range Ising (SRI) universality class~\cite{fisher_critical_1972,sak_recursion_1973,sak_1977}. Particularly relevant to this work is the conformal invariance of these fixed points. While the fractional Gaussian~\cite{rajabpour_conformal_2011} and SRI fixed points are known to be conformally invariant, the LRI fixed point has recently been demonstrated to also possess conformal symmetry~\cite{paulos_conformal_2016}. 


Irrespective of conformal invariance, nonlocal field theories have rather peculiar properties. We have shown previously that, in some cases~\cite{basa_classification_2020}, the classical `nonlocality' of the free fractional Gaussian theory (Eq.~\eqref{pseudo} with g=0) does not survive quantization because it can be removed by a suitable \emph{field redefinition} at the level of the path integral. This surprising interaction between quantization and nonlocality naturally suggests a deeper examination of the symmetries such theories possess and the anomalies associated to them. While we tackle the problem of conformal invariance in the present work, a more general (an ambitious) exercise is to study generic current algebras through this lens.   

In this paper we realize fractional theories possessing classical conformal invariance, in the 2 dimensional setting, as ones governed either by the tensor product of the ordinary Virasoro algebra and a particular commutative algebra, $\mathcal H$, of holomorphic functions on $\mathbb C$ or, more generally, by a fractional Virasoro algebra~\cite{la_nave_fractional_2019} with the $\mathcal H$-structure now carried by the central charge. By constructing the OPEs of the nonlocal stress tensor with itself and conformal primaries, we find that the OPEs of a general nonlocal CFT are compatible with the underlying $\mathcal H$ structure of the fractional Virasoro algebra. In particular the $TT$ OPE encodes the $\mathcal H$-structure as a form of state dependence - a somewhat bizarre notion that, curiously, has emerged independently in addressing the information paradox~\cite{Raju1,Raju2}.
We then prove that a fractional CFT whose central charge $c$ is constant (or not state-dependent), is equivalent, in an appropriate {\it covering}, to a local CFT which provides a practical probe of locality. We then extend this idea to deformations of such CFTs to state an analog of the C-theorem highlighting the importance of interactions in the emergence of nonlocality. We conclude with a discussion of the LRI-SRI crossover problem from the perspective of this new nonlocal refinement of perturbative RG.


\section{Fractional Virasoro algebra}

In $d=2$, let the \emph{fractional holomorphic pseudo-derivative} be defined as
\begin{equation}
  \partial_z^\gamma f(z)= \frac{\Gamma(\gamma+1)}{2\pi i} \int_\Xi \frac{f(\xi)}{(\xi-z)^{1+\gamma}}d\xi
\end{equation}
where the contour $\Xi$, around $z$ should be understood as being lifted to the universal covering due to the branch cut associated with the non-integer power. 

We think of this object relative to the fractional Laplacian, 
\begin{equation}
  \partial^\gamma \bar \partial^\gamma f(z,\bar z) = |C_{\gamma,2}|^2 (-\Delta)^\gamma f(z,\bar z),
\end{equation}
and as the most natural analog of derivations of the Laurent polynomial ring in the fractional setting, $\mathbb C[[z^{\pm \gamma}]]$.
\subsection{Algebraic construction}
We now briefly review the construction of the fractional Virasoro algebra of Ref.~\cite{la_nave_fractional_2019}. In order to understand a nonlocal CFT, as a first step, one has to understand a stress tensor generated by nonlocal pseudo-differential operators. At the level of OPEs, the fractionalization of the symbol of the differential operators in the stress tensor gives rise to nontrivial monodromy that must somehow be encoded in the central charge. Of course, the standard Virasoro algebra, being generated locally, is not able to accommodate this structure. We therefore seek the minimal algebraic construction that does.


The standard Witt algebra is the algebra of derivations on the Laurent polynomial ring, $\mathbb{C}[[z^{\pm 1}]]$. We instead consider  fractional pseudo-derivations on $\mathbb{C}[[z^{\pm \gamma}]]$ with $\gamma\in (0,1)$. To better understand these objects, we imagine a graded complex vector space, 
\begin{equation}
    V = \bigoplus_{k\in \mathbb{Z}} V^{\gamma k},
\end{equation}
with each subspace, $V^{\gamma k}$, spanned by $z^{\gamma k}$. In order to define a derivation, we need a degree decrementing linear map
$$
P_k: V^{\gamma k}\to V^{\gamma(k-1)}
$$
defined as $P(z^{\gamma k })= z^{\gamma(k-1)}$. Then, the fractional derivative is written
\begin{equation}\label{eq:deriv}
  \partial^\gamma _z = \bigoplus_k \lambda_\gamma\frac{\Gamma(\gamma k +1)}{\Gamma(\gamma(k-1)+1)}P_k.
\end{equation}
The important generalization here is that the coefficients now are meromorphic functions of $k$ rather than being $k$ itself. The term that depends only on $\gamma$, $\lambda_\gamma$, is arbitrary so long as $\lambda_{\gamma\to 1}=1$. Similarly, the meromorphic coeffcients reduce to $k$ under $\gamma\to 1$. The definitions
$$\Gamma_p^\gamma(s):=\frac{\Gamma(1+(s+p)\gamma)}{\Gamma(1+(s-1+p)\gamma)},\quad A_{p,q}(s):=\Gamma_p^\gamma(s)-\Gamma^\gamma_q(s) $$
will be helpful moving forward. 

We may now ask if $L_n^\gamma :=-z^{(n+1)\gamma}\partial_z^\gamma$ acting on $\mathbb C[[z^{\pm \gamma}]]$ generates some Lie algebra analogous to the Witt algebra. 
A constructive approach one could take here is to compute the bracket $[\phi\otimes L_m^\gamma,\psi\otimes L^\gamma_n]$ with $\phi,\psi$ belonging to some subalgebra of the full algebra of meromorphic functions. These functions are introduced to keep track of the degree dependence of the action of the generators on basis elements:
\begin{equation}
\begin{aligned}
   \phi\otimes  L_n^\gamma(z^{\gamma k}) &= \phi(k)L_n^\gamma(z^{\gamma k})\\
   &=\phi(k)\Gamma_0(k)z^{\gamma (k+n)}
\end{aligned}
 \end{equation} 
 Now,
 \begin{equation}
    [\phi\otimes L^\gamma_m,\psi \otimes L^\gamma_n]=(\psi(m+s)\phi(s)\Gamma^\gamma_m-\psi(s)\phi(n+s)\Gamma_n^\gamma)\otimes L^\gamma_{m+n}
 \end{equation}
prompts the definition of the \emph{fractional bracket},
\begin{equation}\label{eq:bracket}
  [\phi(s),\psi(s)]_{m,n}:=(\psi(m+s)\phi(s)\Gamma^\gamma_m-\psi(s)\phi(n+s)\Gamma_n^\gamma)(1-\delta_{m,n}).
\end{equation}
We thus seek the smallest subalgebra of meromorphic functions that is closed under this bracket to append to the fractional generators to recover the analogous fractional Witt algebra. Calling this subalgebra $\mathcal H$, we may write
\begin{equation}
    \mathcal{W}_\gamma := \bigoplus_{n\in \mathbb{Z}}\mathcal{H}l_n^\gamma.
\end{equation}

$\mathcal W_\gamma$ is not exactly a classical Lie algebra, however. Let $\mathcal A$ be an algebra equipped with integer parameterized product, $\star_{p,q}$. Then a \emph{$\mathcal A$-Lie algebra} is the graded $\mathcal A$ module, $\mathcal W =\oplus_{n\in \mathbb Z}W_n$ endowed with bracket $[\cdot,\cdot]:\mathcal W\times \mathcal W\to \mathcal W$ such that 
\begin{equation}
  [aL, bK] = a\star_{p,q}b[L,K], \quad a,b\in \mathcal A,\; L\in \mathcal W_p,\;K\in \mathcal W_q.
\end{equation}
Equipping the algebra of meromorphic functions holomorphic around $\text{Re }z\in \mathbb Z$ with 
$$
\phi(s)\star_{p,q}\psi(s):=\frac{\psi(p+s)\phi(s)\Gamma^\gamma_p-\psi(s)\phi(q+s)\Gamma_q^\gamma}{\Gamma_p^\gamma(s)-\Gamma^\gamma_q(s)},
$$
we can define the associated bracket in Eq.~\ref{eq:bracket} to characterize the fractional Witt algebra as a \emph{$\mathcal H$-Lie algebra} spanned by $L^\gamma_n$ with
\begin{equation}
   [L_m^\gamma, L_n^\gamma] f(z)=\sum_k f_k A_{m,n}(k)L_{m+n}^\gamma(z^{\gamma k}), \quad f(z) = \sum_n f_n z^{\gamma n} .
 \end{equation} 
If we take $\gamma\to 1$, $\mathcal H\to\mathbb C$. Of course, a $\mathbb C-$Witt algebra is the Witt algebra itself.

We are now in a position to consider central extensions,
\begin{equation}
  0\to \mathcal H\to \mathcal V_\gamma\to\mathcal W_\gamma \to 0.
\end{equation}
The extensions of the fractional Witt algebra by $\mathcal H$ to the fractional Virasoro algebra, $\mathcal V_\gamma$ are indeed $\mathcal H$ parameterized ($H^2(\mathcal W_\gamma,\mathcal H)\cong \mathcal H$). Thus, to a meromorphic function, $c(s)\in \mathcal H$, we associate a fractional Virasoro algebra. This construction indicates that the price we have to pay to \emph{quantize} a classically nonlocal conformal symmetry is a non-constant central charge. The machinery internal to the construction leads us to interpret this as a degree or \emph{state} dependence of a CFT (defined through the set of its correlators) that realizes $\mathcal V_\gamma$.

A central charge that is no longer a c-number but rather an operator is rather difficult to interpret physically. However, by the CS extension mechanism (or some more general analog), one should keep in mind that these fractional theories can be \emph{extended} to well-behaved theories.  Explicitly, the state-dependent formulation of the nonlocal theory is a representation of a state-independent local theory in one higher dimension. This consistency condition also demands that the space of states of the nonlocal CFT be parameterized by the moduli of the higher dimensional theory. For a standard field theoretic application of this idea to quantization, see~\cite{quant_fractional}.

\subsection{Geometric interpretation}\label{sec:geo}

The nonlocal Virasoro algebra that we have defined so far can be given a geometric interpretation: Informally, we envision attaching a Riemann surface to points in a complex domain and an associated Hilbert space so that this base point parameterizes the state dependence of the (soft) breaking of fractional confromal invariance upon quantization.   


Letting $D:= \mathbb C\setminus \{ z= x+iy:\; x\in \gamma \mathbb Z, \; x<0, \; y=0\}$ (here we assume $\gamma \in (0,1)$ {\it irrational}, for simplicity), we define a holomorphic family $\pi: \mathcal X \to D$ of Riemann surfaces. In order to avoid unnecessary complications, we assume that the fibers $\Sigma:=\pi ^{-1} (s)$ with $s\in D$ are all complex-analytically isomorphic and we will take all fibers to possess the same conformal structure. 
 
 Since we can interpret Eq.~\eqref{eq:deriv} as an operator defined on the Laurent polynomial ring in $\mathbb H=\mathbb{C}[[w^{\pm 1}]]$ with $w=z^\gamma$,
we can consider lifts to the universal covering of $\Sigma$. This induces a new family $\hat \pi: \hat {\mathcal X} \to D$ consisting of the universal covering of the fibers of $\pi$.  



 Fixing a Hilbert space on each fiber, $\hat \Sigma$, with an algebra of operators $\mathcal A$, the space of operators on the whole family $\hat{\mathcal X}$ can be thought of as  $\mathcal O(D)\otimes \mathcal A$, where $\mathcal O(D)$ is the function space of $D$ (or the structure sheaf of $D$). The setup is concluded with the realization that $\mathcal O(D)\equiv \mathcal H$ with the linear action on $\mathbb H$ of the form $\phi \star w^{k} = \phi (k)$ for $\phi\in \mathcal H$.


  \emph{We can now envision a form of nonlocal conformal transformations in this context as fiberwise conformal transformations of the family $\hat{\mathcal X}\to D$}. This is done by considering a family of metrics, $g(s)$, on the fibers $\hat\pi ^{-1} (s)$, varying smoothly in the parameter $s$ and then considering fiberwise conformal transformations of this family of metrics. In fact, we think of the exponents $k$ of the states $w^k$ as points in $D$ and for each such point we consider the fiber $\pi^{-1}(k)$ endowed with the metric $g_k$ and we consider `local' conformal transformation (i.e. holomorphic or antiholomorphic maps) of this fiber. On the whole fiber bundle, then, the fiber-wise conformal transformations appear as if twisted nonlocally (in general) because we modify the Witt generators at each level (identifying the level with a point of the base $D$). Via identifying all the fibers, if the coefficients $\phi\in \mathcal H$ of the operators are to be taken constant, this would recover the classical Virasoro algebra.

 Stated in more phydicsal terms, operators, in this setting, have to be defined with respect to their actions on states in general. This is naturally also true for their commutators - hence the deep correspondence between state-dependence and nonlocality.

\section{Stress tensor OPEs}

We work only in the holomorphic sector for the duration of the paper. The \emph{fractional conformal fields} acting on a Hilbert space $\mathbb H$ of states are fractional Laurent series, 
\begin{equation}
  \Phi(z) = \sum_{k\in \mathbb Z}\Phi_k z^{-\gamma k}\in\mathcal H\otimes\text{End}\mathbb H[[z^{\pm\gamma}]],
\end{equation}
with finitely many negative modes. The fractional modes are required to have a prescribed $\mathcal H$-linear action on the Hilbert space. Among these fields, the stress tensor, 
\begin{equation}
  T(z) = \sum_{k\in \mathbb Z}L^\gamma_k z^{-\gamma(k+2)}
\end{equation}
is of special importance as it encodes the \emph{localized} (but not local) fractional conformal invariance. The fractional Virasoro modes are
\begin{equation}
  L^\gamma_n = - z^{\gamma(n+1)}\partial^\gamma = \frac{1}{2\pi i}\oint dz z^{\gamma (n+1)}T(z).
\end{equation}
We remind the reader that the contour integral should be understood as being paired with appropriate lifts. We define a fractional conformal field $\Phi(z,\bar z)$ to be a primary if its variation under a fractional conformal transformation  is
\begin{equation}
   \delta_{\epsilon,\bar \epsilon}\Phi(z,\bar z) =  \left(h_\gamma \partial^\gamma \epsilon +\epsilon \partial^\gamma +\text{conj.} \right)\Phi(z,\bar z).
 \end{equation} 
This is to say, the space of primary operators on $\mathcal X$ has trivial (constant) $\mathcal H$ structure. This ensures compatibility between a well defined Hamiltonian, $H:=L_0^\gamma+\bar L_0^\gamma$, and primary of dimension $h_\gamma$ as we shall demonstrate. The $T\Phi$ OPE for which this variation is recovered is obtained through 
\begin{align*}
        Q_{\epsilon} &:= \oint \frac{dz}{2\pi i} \epsilon(z)T(z)\\
        [Q, \Phi(w)] &=\oint_0 \frac{dw}{2\pi i}\oint_w \frac{dz}{2\pi i}\epsilon(z)  T(z) \Phi(w,\bar w) \\
        &= \oint_0 \frac{dw}{2\pi i}\oint_w \frac{dz}{2\pi i}\epsilon(z)\underbrace{\left(\frac{h C_\gamma }{(z-w)^{1+\gamma}}+\frac{\partial^\gamma_w }{z-w}\right)}_{T(z)\Phi(w)}\Phi(w)\\
        &=\delta_\epsilon \Phi(w)
    \end{align*}
    Using this OPE, one can compute also
    \begin{equation}
      \begin{aligned}
        \left[L^\gamma_n,  \Phi(w)\right] &= \frac{1}{2\pi i}\oint_w dzz^{\gamma (n+1)}T(z) \Phi(w)\\
    &=\frac{1}{2\pi i}\oint_w dzz^{\gamma (n+1)}\left(\frac{h_\gamma }{(z-w)^{1+\gamma}}+\frac{\partial^\gamma_w }{z-w}\right)\Phi(w)\\
    &=h_\gamma \partial_z^\gamma (z^{\gamma (n+1)})\vert_{w}\Phi(w)+ w^{\gamma(n+1)}\partial^\gamma \Phi(w),\\
    &=h_\gamma \Gamma_1^\gamma(n)w^{\gamma n}\Phi(w)+w^{\gamma(n+1)}\partial^\gamma\Phi(w)
      \end{aligned}
    \end{equation}
    for $\gamma n\geq -1$. Setting $n=0$, we see that $L_0^\gamma |h_\gamma\rangle = h_\gamma|h_\gamma\rangle$ makes sense for $|h_\gamma\rangle:=\phi(0)|0\rangle$. 

    We also have the commutator 
 \begin{equation}\label{eq:hamcom}
  \begin{aligned}
  \left[L_0^\gamma, L_{-m}^\gamma \right]=-(\Gamma_0^\gamma(s)-\Gamma_{-m}^\gamma(s)) \otimes L^\gamma_{-m},
  \end{aligned}
 \end{equation}
  which increments the conformal dimension. 

  In correspondence with the $T\Phi$ OPE, the $TT$ OPE maybe obtained by demanding consistency with $\mathcal V_\gamma$ commutation relations:
 \begin{equation}
  \begin{aligned}
    \left[L_m^\gamma, L_n^\gamma\right] &= \left[\oint \frac{dz[dw]}{(2\pi i)^2}\right]_Cz^{\gamma(n+1)} T(z) w^{\gamma(m+1)}T(w)\\
    &=\left[\oint \frac{dz[dw]}{(2\pi i)^2}\right]_Cz^{\gamma(n+1)} w^{\gamma(m+1)}\\&\quad\quad\quad\left(\frac{\hat c}{(z-w)^{3\gamma+1}}+\frac{(1+\gamma ) T(w)}{(z-w)^{1+\gamma}}+\frac{\partial^\gamma_w T}{z-w}\right)\\
        &=   A_{m,n}\otimes L_{m+n}^\gamma+ \omega(L^\gamma_m,L^\gamma_n)\otimes Z^\gamma
  \end{aligned}
 \end{equation}
 We schematically denote by $\hat c\in\mathcal H$ the function that characterizes the particular extension of $\mathcal W_\gamma$ we are considering. More precisely, $(\omega(L^\gamma_m,L^\gamma_n)\otimes Z^\gamma)(z^{\gamma k}) =  \delta_{m+n,0}\eta_n(k)c(k)$ where $c(k):=Z^\gamma(z^{\gamma k})$ is the degree dependent central charge.


 Collecting the two elementary OPEs,
 \begin{equation}
  \begin{aligned}
    T(z)\Phi(w) &\sim \frac{ h_\gamma \Phi}{(z-w)^{1+\gamma}}+\frac{\partial_w^\gamma \Phi}{z-w},\\
    T(z)T(w) &\sim \frac{ \hat c}{(z-w)^{3\gamma+1}}+\frac{(1+\gamma ) T_k(w)}{(z-w)^{1+\gamma}}+\frac{\partial^\gamma_w T_k}{z-w},
  \end{aligned}
 \end{equation}
  we find that the first specifies a notion of primary weight independent of $\mathcal H$-structure while the second suggests a $\mathcal H$ parameterized family of descendants where elements of the same family are $\mathcal H$-related.

  The latter is interpreted as a \emph{state dependence}. To better illustrate this, we consider the form of inner products of descendants in modules of $\mathcal V_{\gamma}$. Let 
 $$
|n_1,..,n_m \rangle:=L^\gamma_{-n_1}...L^\gamma_{-n_m}|h\rangle,
 $$   
 where $n_1,...,n_n>0$ be a generic descendant obtained from a primary of weight $h$. Then, the usual procedure of commuting through the generators to evaluate inner products of descendants motivates the inner product structure
 \begin{equation}
 \langle k_1,\cdots,k_i,\cdots,k_l   |n_1,\cdots,n_j,\cdots,n_m \rangle = \sum_{k,n} c_{k_i, n_j} (k\cdot n)_h
 \end{equation}
 where $(k\cdot n)_h$ is the part of the inner product that depends only on $h$ with $k$ and $n$ being basis vectors. The coefficients $c_{k_i, n_j}$, and hence the inner product, depend on the states that intermediate between the primary and the descendant through the evaluation of an operator valued central charge against the states. This gives rise to a conformal family structure that is much more intricate, with non-trivial mixtures of states. More explicitly, if we envision $\hat c := c+\hat{\mathcal O}$ as our operator valued central charge, and $|h,  o\rangle$ as a highest weight state labeled by eigenvalue $o$, the fractional lowering operator mixes $|h,  o\rangle$ with  $|h-n,  o\rangle$

 \section{Locality}

While the fractional Virasoro algebra can be thought of as a purely abstract generalization, it was conceived in order to make sense of nonlocal currents. Since the algebra and the fractional CFTs that realize it are built around a classically nonlocal pseudo-differential operator\begin{footnote}{We will say an operator is (clasically) nonlocal if $\supp D\phi $ is larger than $\supp \phi$}\end{footnote}, the theory is tightly intertwined with the quantum mechanical notion of locality. 

A diagnostic of the locality of a CFT is the commutation rule of its operator algebra.
 For the present context, we say that two operators, $A$ and $B$ are \emph{local with respect to one another} if it can be established that their commutator, as a formal power series is such that
 \begin{equation}\label{eq:locality}
z^{n\gamma}[A(z), B(0)] = 0.
\end{equation}
 for some positive integer $n$.
 The CFT is local if such a relationship holds for all operator pairs. This is the usual notion of locality of 2D CFTs~\cite{frenkel} extended up to coverings of Riemann surfaces for compatibility with an algebra built on $\mathbb C[[z^{\pm\gamma}]]$. Notice that in choosing this definition, we relegate any mutual nonlocality to $A$ and $B$ being related by an $\mathcal H$-function. In general, for non-trivial (non-constant) $\mathcal H$-valued fusion numbers, the locality criterion fails for any two primaries $A$ and $B$ (descendants are mutually nonlocal by construction).



\subsection{Localizing field redefinitions and a criterion for locality}

One has to be mindful of the potential existence of field redefinitions that map seemingly nonlocal dynamics to their local counterpart. In~\cite{basa_classification_2020} we argue that certain Gaussian partition functions of theories with fractional Laplacian actions are equivalent to their non-fractional counterparts under a dynamical field redefinition.  For instance,
$$
Z_\gamma = \int \mathcal D\Phi e^{\int \Phi(-\Delta)^\gamma \Phi - m^2\Phi^2}
$$
is, up to a constant, equivalent to $Z_1$ under $\Psi = (-\Delta)^{\frac{1-\gamma}{2}}\Phi$ only if $m=0$. More generally, such a \emph{localizing} field redefinition should be compatible with an integration by parts rule,
$$
\int P\Psi P\Psi,
$$
for some pseudo-differential operator $P$. The $m^2\Phi^2$ deformation, for instance, is an obstruction to finding a field redefinition compatible with any pseudo-differential integration by parts rule. See~\cite{basa_classification_2020} for more details. We corroborate the claim that the massless fractional Gaussian theory is seemingly local while its massive counterpart is not by looking at the scaling of the geometric entropy of both QFTs (Only the latter exhibits an area-law violation in the UV.) which is also consitent with~\cite{LT}.

In dimension 2 and for a CFT, we are able to explore this idea more robustly. While the fractional Virasoro algebra and its CFT realization are so far unwieldy objects, they provide insight into what quantum (non)locality really is and how it emerges. We have, then, the following Lemma that codifies these ideas.
\begin{lemma}\label{lem:1}
  The tensor product $\mathcal H _\gamma \otimes W$ inherits the structure of a multi-Lie algebra which is isomorphic to $\mathcal W_\gamma$. Furthermore, if $Vir _{c,\gamma}$ has a central charge in $\mathbb C$, then $Vir_{c,1}$ is a (central) $\mathbb C$-extension of $\mathcal H _\gamma \otimes W$.
\end{lemma}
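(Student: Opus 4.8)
The plan is to make everything explicit through the branched covering $\rho\colon w=z^{\gamma}$ of Sec.~\ref{sec:geo}, which identifies $\mathbb C[[z^{\pm\gamma}]]$ with the ordinary Laurent ring $\mathbb C[[w^{\pm1}]]$ via $z^{\gamma k}\mapsto w^{k}$ and, by Eq.~\eqref{cover-nonloc}, rewrites $\rho^{*}\partial^{\gamma}_{z}$ as the $\mathcal H$-dressed operator $\bigoplus_{k}\lambda_{\gamma}\Gamma^{\gamma}_{0}(k)P$ on $\mathbb C[[w^{\pm1}]]$, so that each $L^{\gamma}_{n}$ becomes an $\mathcal H$-dressing of the Witt generator $L_{n}=-w^{n+1}P$. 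With $W=\bigoplus_{n}\mathbb C L_{n}$ the ordinary Witt algebra, the first assertion is then a transport-of-structure statement along this identification, and the second an analysis of the central-extension cocycle under the same identification in the case where the central function is constant.

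First I would endow the graded $\mathcal H$-module $\mathcal H_{\gamma}\otimes W=\bigoplus_{n}\mathcal H\otimes\mathbb C L_{n}$ with the bracket $[\phi\otimes L_{m},\psi\otimes L_{n}]:=(\phi\star_{m,n}\psi)\otimes A_{m,n}L_{m+n}=[\phi,\psi]_{m,n}\otimes L_{m+n}$, where $\star_{p,q}$ is the product defined above, $[\cdot,\cdot]_{m,n}$ is the fractional bracket of Eq.~\eqref{eq:bracket}, and $A_{m,n}$ is read as the $\mathcal H$-element $k\mapsto A_{m,n}(k)$ acting degreewise. By construction this is $\star_{p,q}$-bilinear and graded-skew, and the graded Jacobi identity reduces to exactly the $\Gamma^{\gamma}_{p}$-identities that make $\mathcal W_{\gamma}$ an $\mathcal H$-Lie algebra in Ref.~\cite{la_nave_fractional_2019}; this gives the multi-Lie structure the tensor product ``inherits.'' The map $\Psi\colon\mathcal H_{\gamma}\otimes W\to\mathcal W_{\gamma}$, $\phi\otimes L_{n}\mapsto\phi L^{\gamma}_{n}$, is then a degree-preserving $\mathcal H$-module isomorphism (each homogeneous component is free of rank one over $\mathcal H$ and $L^{\gamma}_{n}\neq0$), and the bracket computation recalled just before Eq.~\eqref{eq:bracket} shows it intertwines the two brackets, so it is an isomorphism of multi-Lie algebras.

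Next I would turn to the central extensions. Using $H^{2}(\mathcal W_{\gamma},\mathcal H)\cong\mathcal H$, represent $Vir_{c,\gamma}=\mathcal V_{\gamma}$ by the $\mathcal H$-valued $2$-cocycle $\omega_{c}$ with $\omega_{c}(L^{\gamma}_{m},L^{\gamma}_{n})$ acting on degree $k$ by $\delta_{m+n,0}\eta_{n}(k)c(k)$, and assume $c(k)\equiv c\in\mathbb C$. Pulling $\omega_{c}$ back along $\Psi$ gives a cocycle $\widetilde\omega_{c}$ on $\mathcal H_{\gamma}\otimes W$ which, since $c$ is constant, takes values in the constant line $\mathbb C\subset\mathcal H$ --- i.e. a genuine $\mathbb C$-valued $2$-cocycle --- so the associated extension $0\to\mathbb C\to E\to\mathcal H_{\gamma}\otimes W\to0$ is defined over $\mathbb C$ and is central. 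Evaluating $\widetilde\omega_{c}$ on the $1\otimes L_{m}$ and comparing with $\eta_{n}$, I would identify it up to a coboundary with $c$ times the Gelfand--Fuks cocycle $\tfrac1{12}(m^{3}-m)\delta_{m+n,0}$ inflated along $\mathcal H_{\gamma}\otimes W\to W$, so that $E$ is the asserted $\mathbb C$-central extension; consistently, as $\gamma\to1$ one has $\mathcal H\to\mathbb C$, $\star_{p,q}\to$ ordinary multiplication, $A_{m,n}(k)\to m-n$, $\mathcal H_{\gamma}\otimes W\to W$, and $E\to Vir_{c,1}$ on the nose, which is the sense in which the statement should be read.

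The hard part is the comparison in Step~2: one must show that for constant $c$ the cocycle $\widetilde\omega_{c}$ is not merely $\mathbb C$-\emph{valued} but actually cohomologous to the ordinary Virasoro cocycle, i.e. that the residual degree dependence carried by $\eta_{n}(k)$ and by the $\Gamma^{\gamma}_{p}$-dressings hidden inside $\Psi$ is removed by an $\mathcal H$-valued coboundary once $c$ no longer depends on $k$. This is precisely the point at which the paper's ``nonlocal $\Leftrightarrow$ state-dependent $c$'' dichotomy is being cashed in, and I expect it to need the most care. A secondary, bookkeeping-level issue, to be handled as in Sec.~\ref{sec:geo}, is that every contour integral and bracket really lives on the universal cover of a punctured neighbourhood because of the branch cut of $z^{\gamma}$, so $\rho$, $\Psi$ and the transported extension are only canonical once the holomorphic section $p(s)$ of $\pi$ is fixed; one should check the construction is independent of that choice, equivalently invariant under the monodromy.
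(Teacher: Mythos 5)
Your first step coincides with the paper's: the bracket $[\phi\otimes L_m,\psi\otimes L_n]=[\phi,\psi]_{m,n}\otimes L_{m+n}$ on $\mathcal H_\gamma\otimes W$ and the degreewise map $\phi\otimes L_n\mapsto \phi L^\gamma_n$ are exactly how the paper establishes $\mathcal H_\gamma\otimes W\cong\mathcal W_\gamma$. For the second claim the routes genuinely diverge. The paper does not touch cocycles at all: it writes the two short exact sequences $0\to\mathcal H_\gamma\to \text{Vir}_{c,\gamma}\to\mathcal W_\gamma\to 0$ and $0\to\mathcal H_\gamma\otimes\mathbb C\to\mathcal H_\gamma\otimes\text{Vir}_{c}\to\mathcal H_\gamma\otimes W\to 0$, matches kernels and quotients, and reads off the middle isomorphism $\text{Vir}_{c,\gamma}\cong\mathcal H_\gamma\otimes\text{Vir}_{c,1}$ from the resulting ladder --- so its conclusion is a tensor-product decomposition of the whole algebra (an $\mathcal H_\gamma$-extension of $\mathcal H_\gamma\otimes W$), whereas you extract a genuine $\mathbb C$-extension $E$ by pulling the defining $2$-cocycle back along $\Psi$. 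Your version is more informative but also more honest about where the content lies: matching the outer terms of two extensions does not by itself give the middle isomorphism --- one needs the two extension classes to agree in $H^2$, which is precisely your ``hard part,'' namely that once $c(k)\equiv c$ the residual $k$-dependence in $\eta_n(k)$ and in the $\Gamma^\gamma_p$-dressings of $\Psi$ is killed by an $\mathcal H$-valued coboundary, leaving $c$ times the inflated Gelfand--Fuks cocycle. The paper's dashed middle arrow silently assumes exactly this, so neither argument actually carries out that verification; if you complete it (e.g.\ by exhibiting the explicit $1$-cochain that rescales the generators degreewise, the same rescaling that trivializes the $\Gamma^\gamma_0(k)$ factors in Eq.~\eqref{cover-nonloc}), your proof would be strictly stronger than the one in the paper. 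One small caution: as written, your claim that $\widetilde\omega_c$ ``takes values in the constant line $\mathbb C\subset\mathcal H$'' is premature --- it is only $\mathbb C$-valued \emph{after} the coboundary correction, since $\delta_{m+n,0}\,\eta_n(k)\,c$ still depends on $k$ through $\eta_n$; you correct this yourself two sentences later, but the two statements should not both stand.
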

\begin{proof}
  Let $c\in \mathbb C$ be a constant and consider the fractional central extension along with the usual integer extension tensored with $\mathcal H_\gamma$ as vector spaces:

\begin{equation*}
  \resizebox{0.65\columnwidth}{!}{%
\begin{tikzpicture}
  \node (A) {$0$};
  \node (B) [right of=A] {$\mathcal H_\gamma$};
  \node (C) [right of=B] {$\text{Vir} _{c,\gamma}$};
  \node (D) [right of=C] {$ \mathcal W _\gamma$};
  \node (E) [right of=D] {$0$};
  \node (A1) [below of = A] {$0$};
  \node (B1) [right of=A1] {$\mathcal{H}_\gamma\otimes \mathbb C $};
  \node (C1) [right of=B1] {$ \mathcal{H}_\gamma\otimes \text{Vir}_{c}$};
  \node (D1) [right of=C1] {$\mathcal{H}_\gamma\otimes W$};
  \node (E1) [right of=D1] {$0$};
  \draw[->] (A) to node {} (B);
  \draw[->] (B) to node {} (C);
  \draw[->] (C) to node {} (D);
  \draw[->] (D) to node {} (E);
  \draw[->] (A1) to node {} (B1);
  \draw[->] (B1) to node {} (C1);
  \draw[->] (C1) to node {} (D1);
  \draw[->] (D1) to node {} (E1);
  \draw[<->, dashed] (B1) to node {} (B);
  \draw[<->, dashed] (C1) to node {} (C);
  \draw[<->, dashed] (D1) to node {} (D);
\end{tikzpicture}%
}
\end{equation*}
Clearly, $\mathcal{H}_\gamma\otimes \mathbb C \cong \mathcal H_\gamma$ up to a constant. The Lie bracket on $\mathcal H _\gamma \otimes W$, with $W$ the integer Witt algebra, is given by
\begin{equation}
[\phi\otimes L_n, \psi \otimes L_m]= [\phi, \psi]_{m,n} \otimes L_{m+n}.
\end{equation}
Thus, $\mathcal H_\gamma \otimes W \cong \mathcal W_\gamma$. Then, we have the middle isomorphism,
\beq
\text{Vir}_{c, \gamma}\cong \mathcal{H}_\gamma\otimes \text{Vir}_{c,1}.
\eeq
\end{proof}

Physically, the fractional theories that are local by our definition (which allows for coverings) are those with constant central charge because the object $\text{Vir}_{c,1}\otimes \mathcal H_\gamma$ is generated by a branched stress tensor with fixed central charge (which is local with respect to itself). Also, note that the existence of the Witt algebra isomorphism can be understood as the existence of a field redefinition that translates to the existence of a `localizing' redefinition at the level of OPEs. Then, we are in a position to verify the conjecture of Ref.~\cite{basa_classification_2020} in the context of 2D CFT, namely that a theory is nonlocal if there does not exist a localizing field redefinition.

\section{Comments on nonlocal field theories}\label{sec:QFT}


\subsection{Fractional bosonic CFT}\label{sec:boson}

The simplest actionable fractional 2D CFT one can consider is
\begin{equation}
  \label{eq:frac_boson}
  S = g\int d^2 z \partial^\gamma\Phi \bar \partial^\gamma \Phi.
\end{equation}
 The conformal invariance of the fractional bosonic CFT has been well established~\cite{paulos_conformal_2016,rajabpour_conformal_2011} as we mentioned previously and can be shown to follow from the CS extension theorem~\cite{caffarelli_extension_2007}
which is of course equivalent to the well studied fractional free Gaussian theory. This theory, while built out of a nonlocal kinetic operator, does not furnish a generalized operator product algebra with an operator-valued central charge. While this fact is not surprising in light of Lemma 1, it is instructive to see how it comes about. To this end, we will exhibit a field redefinition that localizes the action in an approach that resembles techniques such as bosonization/fermionization.


Consider the field redefinition $\Phi = \partial^{1-\gamma} \Phi'$ under which one has, up to the singular terms of the binomial expansion, 
\begin{equation}
\begin{aligned}
  \partial &^\gamma \partial^{1-\gamma} \Phi'(z) \\ &\sim  \int \frac{d\xi d\eta}{(2\pi i)^2} \frac{\Phi' (\eta)}{(\eta-\xi)^{2-\gamma}}\frac{1}{(\eta -z)^{1+\gamma}}\\
  &\sim\sum_{kl} \int \frac{d\xi d\eta}{(2\pi i)^2}\Phi'(\eta) \eta^{-3-k-l}(-\xi)^{k}(-z)^{l}\\
  &\sim\sum_{kl} \int \frac{d\xi d\eta}{(2\pi i)^2}\Phi'(\eta) \eta^{-3-k-l}(-\xi)^{k}(-z)^{l}\\
  &\sim \int \frac{d\xi}{2\pi i}\frac{\Phi'(\xi)}{(z-\xi)^2}\\
  &\sim \partial \Phi'(z),
\end{aligned}
\end{equation}
where $\sim$ denotes an equivalence of analytical structure. Of course, arbitrary field redefinitions cannot be used.  We say that a field redefinition is physically meaningful if the induced transformation of the partition function is simply a rescaling. That such a \emph{non-trivial} field redefinition exists is an unusual property of QFTs in general.

Under the aforementioned field redefinition, one has
\begin{equation}
  \frac{1}{(z-w)^{n+\gamma}}\mapsto \frac{1}{(z-w)^{n+1}}
\end{equation}
and hence the OPE for the fractional free Boson can readily be constructed:
\begin{equation}
\label{eq:phiphi}
\begin{aligned}
    \partial^\gamma \Phi(z)\partial^\gamma \Phi(w) &\sim \frac{1}{(z-w)^{1+\gamma}}\\
    T(z)\partial ^\gamma \Phi(w)&\sim\frac{(2-\gamma)\partial^\gamma_w \Phi(w)}{(z-w)^{1+\gamma}}+\frac{\partial^{2,\gamma}_w\Phi(w)}{z-w}\\
    T(z)T(w)&\sim \frac{c_\gamma/2}{(z-w)^{1+3\gamma}}+\text{Laurent exp.}
  \end{aligned}
\end{equation}

Upon removing the branches of the singular terms by taking a cover of the fractional OPE, one obtains the usual free scalar OPE with $c_{\gamma\to1}=1$, which by our definition makes this theory local.

\subsection{Deformations and Renormalization}\label{sec:defo}

With the understanding that the free fractional Boson is a fractional CFT with constant central charge, we now motivate the conceptual framework behind a perturbation theory built around such a fixed point to better understand how nonlocality emerges under deformations and, in contrast to the preceding example, how one can identify fractional CFTs with nonconstant central charge.


Knowledge of the scaling dimensions and the coefficients of the 3-point function are sufficient to generate the (one loop) beta functions for the various relevant couplings of the perturbative QFT. For coupling $g_i$ associated with a deformation of dimension $h_i$, the $\beta$-function reads
\begin{equation}\label{eq:beta}
    \beta(g_i) = (d-h_i)g_i- \mathcal N^{jk}_i g_jg_k  +\cdots.
\end{equation}
This expression encodes the effect of a deformation of the Gaussian fixed point by the relevant local operators at the level of coupling constants. The field redefinition arguments of~\cite{basa_classification_2020} $\beta(g_i)>0$ imply the non existence of a localizing field redefinition. We can argue, as a first step, that
 the perturbative deformations of a 2D fractional CFT stress tensor via a set of operators that are mutually local with respect to one another generate a flow to a nonlocal fractional CFT with $c\in\mathcal H$ in general.

In the simplest incarnation of this idea, if $\{:\phi^n:\}_{n\in [1, m]}$ is the set of local deformations, then $\phi$ is mutually local with respect to itself and its integer powers but not necessarily the stress tensor of the unperturbed fractional CFT. Given a perturbative polynomial deformation, one has for the stress tensor at an accessible  conformal fixed point
$$
T=T_0^{(\gamma)}+\delta T^{(1)},
$$
where the superscript denotes the value of $\gamma$ for which Eq.~\eqref{eq:locality} holds for $\phi$ paired either with (powers of) itself or the free stress tensor. The idea here is to realize that because $T_0^{(\gamma)}$ is mutually nonlocal wrt $\delta T^{(1)}$ in general, $T_0^{(\gamma)}\delta T^{(1)}$  differs from $T_0^{(\gamma)}T_0^{(\gamma)}$ or $TT$ by a strictly non-constant $\mathcal H$ function at leading order.



\subsection{A C-theorem for fractional QFT}
We had established that constant central charge theories constructed out of nonlocal operators have equivalent local representations. 
The intuition of the sketch in Sec.~\ref{sec:defo} implies that this equivalence is not stable to perturbations, however. 
Suppose $\mathcal O_i := \partial_{g_i} \mathcal L (g, \Lambda_{UV}, \mu)$ for $\mathcal L$ some Lagrangian and define 
\begin{equation} \beta_i:=\frac{dg_i}{dlog\mu}\end{equation}
Renormalizabily is equivalent to the condition that 
\beq 
\text{tr } T = \beta ^i (g) \mathcal O_i
\eeq
where $\beta ^i (g)$ are the components of the vector field that generate the renormalization group. If $T$ is not a {\it local} stress energy tensor, we also have to note the fusion coeffcients are valued in $\mathcal H$ and hence are degree/state dependent:
\begin{equation}
  \phi\otimes\mathcal O_i\times \psi \otimes \mathcal O_j = \sum_k (\phi \star_{i,j}\psi)\otimes \mathcal O_k.
\end{equation}
Assume that there exists subspaces of the fractional graded vector space of states, $V=\bigoplus_{k\in\mathbb Z} V^{\gamma k}$, along the flow and at the target conformal fixed point. For each $\gamma$, the grading gives rise to a sequence,
\begin{equation}
   G_{ij}(k) := \eta^{3\gamma +1} \langle \mathcal O_i(\eta)\mathcal O_j(0) \rangle\big\vert_{V^{\gamma k}},
  \end{equation}
  on the configuration space of the theory.
We also define the following quantities, following Zamolodchikov in \cite{zamo}
\begin{equation}
C(k)= 2 \eta^{3\gamma +1} \langle T(\eta) T(0)\rangle\big\vert_{V^{\gamma k}}
\end{equation}
and
\begin{equation}
H_i(k)= 2 \eta^{3\gamma +1} \langle T(\eta) \mathcal O_i(0)\rangle\big\vert_{V^{\gamma k}}
\end{equation}
 in order to formulate a graded version of the C-theorem~\cite{zamo}:
\begin{theorem}
  If $V^{\gamma k}$ contains no negative norm states, $G_{ij}(k)$ is a sequence of metrics on $\mathcal G$. Then, the C-theorem holds independently for each $V^{\gamma k}$:
  \begin{equation}
  c(g;k)=C(k) + 4 \beta ^iH_i(k)- 6 \beta ^i \beta ^j G_{ij}(k)
  \end{equation} 
 satisfies the following properties
 \begin{enumerate}
 \item $c$ is non-increasing along the graded flow
  \begin{equation}\dot c\vert_k \leq 0\end{equation}
 \item critical points (i.e. points for which $\beta^i(g^*)=0$) are stationary points for $c(g;k)$
 \begin{equation}\label{eq:crit}
 \frac{\partial c(g;k)}{\partial g^i}\bigg\vert_{g^*}=0
 \end{equation}
 
 \item 
At a critical $g^*$ the theory has the symmetry of the fractional Virasoro algebra with central charge equal to the value of $c(g;k)$ at $g=g^*$.
 \end{enumerate}
\end{theorem}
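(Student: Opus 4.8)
The plan is to mimic Zamolodchikov's original argument, but carried out fiberwise over the base $D$ — i.e., graded piece by graded piece $V^{\gamma k}$ — using the $T\Phi$ and $TT$ OPEs established in Section~3 together with the $\mathcal H$-Lie structure of Lemma~\ref{lem:1}. The three functions $C(g)$, $H_i(g)$, $G_{ij}(g)$ are, for each fixed $k$, honest numerical correlators (their $\mathcal H$-dependence has been evaluated against $V^{\gamma k}$), so for fixed $k$ we are in a situation formally identical to the local case, except that the scaling weight of $T$ is $3\gamma+1$ rather than $4$ and dimensions are measured by $\partial^\gamma$ rather than $\partial$. First I would derive the fractional analogue of the two Ward-type identities Zamolodchikov uses: conservation of the (non-local) stress tensor, $\bar\partial^\gamma T + \partial^\gamma T_{\eta\bar\eta} = 0$ together with its conjugate, and the trace relation $T_{\eta\bar\eta} = \beta^i(g)\mathcal O_i$ already quoted in the excerpt as the statement of renormalizability. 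Feeding these into $\partial_{\bar\eta}^\gamma$ of the three correlators and using rotational invariance to fix the $\eta$-dependence (so each correlator is a pure power of $\eta\bar\eta$ times a function of $g$), one obtains a closed system of three first-order differential relations among $C$, $H_i$, $G_{ij}$ and $\beta^i$ on each graded slice.

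Second, I would take the specific combination $c(g) = C + 4\beta^i H_i - 6\beta^i\beta^j G_{ij}$ — the coefficients being exactly those forced by the fractional scaling dimensions, which I would pin down by requiring the cross terms in $\dot c$ to telescope — and compute $\dot c = \beta^i\partial_i c$. The differential identities from step one should collapse this to $\dot c = -12\,\beta^i\beta^j G_{ij}(k)$ (up to the normalization of the metric). Property (1), $\dot c \le 0$, is then immediate from the hypothesis that $V^{\gamma k}$ contains no negative-norm states, which makes $G_{ij}(k)$ positive semi-definite — this is precisely the place where the ``parameterized unitarity'' discussion at the end of Section~3.3 is used, and it is why the theorem must be stated slice by slice rather than as an $\mathcal H$-valued inequality (there is no order on $\mathcal H$). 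Property (2) follows because at a fixed point $\beta^i(g^*)=0$, so $\partial_i c|_{g^*} = \partial_i C|_{g^*}$, and differentiating the defining relation for $C$ while using $\beta^i(g^*)=0$ and the differential identities kills the remaining term; alternatively, stationarity is forced directly by $\dot c = -12\beta^i\beta^j G_{ij}$ having a critical point wherever $\beta$ vanishes, provided $G_{ij}$ is non-degenerate there. Property (3) is essentially a definition-chasing step: at $g^*$ one has $T_{\eta\bar\eta}=0$, so $T$ is (fractionally) holomorphic, it generates the fractional conformal transformations of Section~3, and $c(g^*) = C(g^*) = 2\eta^{3\gamma+1}\langle T(\eta)T(0)\rangle|_{V^{\gamma k}}$ is by construction the coefficient appearing in the most singular term of the $TT$ OPE — i.e. the value $c(k)$ of the state-dependent central charge of $\mathcal V_\gamma$ on that slice.

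The main obstacle I expect is step one: making the fractional conservation law $\bar\partial^\gamma T + \partial^\gamma T_{\eta\bar\eta}=0$ precise and usable when it sits inside correlators. The operator $\partial^\gamma_{\bar\eta}$ acting on a power $(\eta\bar\eta)^a$ does not simply bring down a numerical factor the way $\partial_{\bar\eta}$ does — it acts through the $\Gamma^\gamma_p$ coefficients of Eq.~\eqref{eq:deriv}, and the relevant contour must be lifted to the universal cover, so ``integrating against $\bar\eta^\gamma$'' to extract a residue is genuinely a covering-space computation rather than the residue at a simple pole. Concretely, the step where Zamolodchikov writes $\partial_{\bar\eta}\langle T(\eta)T(0)\rangle = \tfrac{1}{4}\partial_\eta(\ldots)$ and then recognizes a total derivative has to be replaced by an identity of the schematic form $\partial^\gamma_{\bar\eta}(\text{correlator}) = (\text{coefficient in }\gamma)\cdot\partial^\gamma_\eta(\text{correlator})$, and one must check the $\gamma$-dependent coefficients are exactly compatible with the combination $C+4\beta^iH_i-6\beta^i\beta^jG_{ij}$; this is where the factors $4$ and $6$, here written for general $\gamma$, are either confirmed or must be replaced by $\gamma$-dependent constants degenerating to $4,6$ as $\gamma\to1$. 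Once that bookkeeping is done the rest is, as in the local case, algebra.
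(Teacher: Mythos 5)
Your proposal is essentially the paper's own proof: the authors dispose of the theorem in one line, ``the proof follows the one in \cite{zamo} \emph{mutatis mutandis},'' and your fiberwise adaptation of Zamolodchikov's argument on each graded slice $V^{\gamma k}$ --- conservation plus the trace relation $T_{\eta\bar\eta}=\beta^i\mathcal O_i$, the telescoping combination $c=C+4\beta^iH_i-6\beta^i\beta^jG_{ij}$, positivity of $G_{ij}(k)$ from the no-negative-norm hypothesis, and the identification of $c(g^*)$ with the coefficient of the most singular term of the $TT$ OPE --- is precisely what that ``mutatis mutandis'' is meant to encode. Your closing caveat, that the fractional conservation law inside correlators and the possible $\gamma$-dependence of the coefficients $4$ and $6$ need explicit checking on the universal cover, is a genuine gap in the paper's own treatment rather than in your argument, and is worth retaining.
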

\begin{proof}
  The proof follows the one in~\cite{zamo} \emph{mutatis mutandis}. With the insight of Sec.~\ref{sec:geo}, the C-theorem holds independently on restrictions to fibers over $D$. 
\end{proof}
As an immediate corollary, we can formalize the example in Sec.~\ref{sec:defo}:

\begin{corollary}\label{cor-ctheor}
 An RG flow fixed  point $g^*\in\mathcal G$ in dimension 2 is a local CFT if and only if $c(g^*)$ is not state-dependent ($k$-invariant).
 
 \end{corollary}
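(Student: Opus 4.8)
The plan is to prove both implications by reducing the locality of the fixed-point theory to a statement about the central term of the $TT$ OPE, and then using Lemma~\ref{lem:1} to pass between fractional and ordinary (local) Virasoro data. The starting observation is that at an RG fixed point $g^*$ one has $\beta^i(g^*)=0$, so the formula for $c(g)$ collapses to $c(g^*)=C(g^*)=2\eta^{3\gamma+1}\langle T(\eta)T(0)\rangle\big\vert_{V^{\gamma k}}$. Hence ``$c(g^*)$ is state-dependent'' is literally the statement that the coefficient of the most singular term of the $T(z)T(w)$ OPE at $g^*$ is a non-constant element of $\mathcal H$ (a genuine function of the grading $k$), while ``$c(g^*)\in\mathbb C$'' says it is constant. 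It therefore suffices to show: the fixed-point theory is local $\iff$ this central term is constant.

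For the ($\Leftarrow$) direction, assume $c(g^*)\in\mathbb C\subset\mathcal H$. By part~(3) of the preceding theorem the fixed-point theory carries the symmetry of the fractional Virasoro algebra $\mathrm{Vir}_{c,\gamma}$ with this constant central charge, and Lemma~\ref{lem:1} gives $\mathrm{Vir}_{c,\gamma}\cong\mathcal H_\gamma\otimes\mathrm{Vir}_{c,1}$, so the fractional stress tensor and its modules are the branched pullback, along the covering $w=z^\gamma$, of an ordinary local CFT with fixed central charge $c$. As remarked after Lemma~\ref{lem:1} and exhibited explicitly for the free boson, fermion and ghost in Sec.~\ref{sec:boson}, this isomorphism is realized by a \emph{localizing field redefinition} acting on OPEs by $(z-w)^{-(n+\gamma)}\mapsto(z-w)^{-(n+1)}$; after passing to the cover every OPE singularity has integer order, so Eq.~\eqref{eq:locality} holds for every operator pair, with any residual mutual nonlocality of descendants relegated (as our definition permits) to their being related by an $\mathcal H$-function. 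Thus $g^*$ is a local CFT.

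For the ($\Rightarrow$) direction I argue by contraposition: assume $c(g^*)$ is state-dependent, so $C(g^*)$ is a non-constant function of $k$. A covering or field redefinition can only reshuffle branch structure and integer powers of $z-w$; it cannot turn a non-constant meromorphic function of the grading into a constant, and indeed Lemma~\ref{lem:2} isolates precisely the non-triviality of $C(g)\in\mathcal H$ as the obstruction to a localizing redefinition. Consequently the $T(z)T(w)$ OPE at $g^*$ carries a genuinely non-constant $\mathcal H$-valued coefficient, and by the locality criterion \eqref{eq:locality} together with the observation (made when that criterion is introduced) that non-constant $\mathcal H$-valued fusion/central data cause it to fail --- here for $T$ paired with itself --- the stress tensor is nonlocal with respect to itself, so $g^*$ is not a local CFT.

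The main obstacle is the ($\Rightarrow$) direction: one must rule out the possibility that some non-obvious covering or nonlinear field redefinition secretly absorbs the state-dependence of $c$. The clean way to handle this is to invoke $H^2(\mathcal W_\gamma,\mathcal H)\cong\mathcal H$, which records the extension class faithfully by the function $c(s)$, so a constant-$c$ extension is a genuinely inequivalent algebra from a non-constant one and no module automorphism identifies the two; combined with the fact that the only mutual nonlocality not already relegated by our definition lives in the central term, this pins the obstruction to locality exactly on $c(g^*)$. A secondary technical point, already asserted in the text, that one should verify is that the Witt-algebra isomorphism of Lemma~\ref{lem:1} upgrades from an isomorphism of generators to a genuine equivalence of correlator data, so that the ($\Leftarrow$) direction yields an honest local CFT rather than a merely formal one; the free-field computations of Sec.~\ref{sec:boson} serve as the template for checking this.
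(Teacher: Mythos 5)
Your proposal is correct and follows essentially the same route the paper intends: the paper states the corollary as ``immediate'' from part (3) of the C-theorem (with $\beta^i(g^*)=0$ collapsing $c(g^*)$ to $C(g^*)$), Lemma~\ref{lem:1}'s identification $\mathrm{Vir}_{c,\gamma}\cong\mathcal H_\gamma\otimes\mathrm{Vir}_{c,1}$ for constant $c$, and the observation that a non-constant $\mathcal H$-valued central term makes $T$ nonlocal with respect to itself under Eq.~\eqref{eq:locality}. Your writeup simply makes explicit (including the $H^2(\mathcal W_\gamma,\mathcal H)\cong\mathcal H$ argument ruling out hidden localizing redefinitions) what the paper leaves implicit.
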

We therefore have a notion of a graded RG flow compatible with both the trivial $\mathcal H$ structure of the Gaussian fixed point and the non-trivial $\mathcal H$ structure of the target nonlocal fractional fixed point.
%

\subsection{Renormalizibility and the LRI model}\label{sec:LRI}

 It is rather fortuitous that there exists a conformal fixed point that is a candidate for being a non-trivial example of a nonlocal CFT that has been studied extensively. The long range Ising model is an extension of the standard Ising model where the spin correlations extend over the entire lattice, $\Lambda$, and decay as a fractional power law
 \beq
H=-J\sum_{i,j\in \Lambda } \sigma_i\sigma_j/r_{ij}^{d+2\gamma}, \hspace{2mm J>0}.
\eeq
We fix $d=2$ to remain consistent with our model. This particular all-to-all interaction of spins, in the continuum limit, is associated with the fractional Laplacian. Just as the usual $\phi^4$ theory belongs to the same universality class as the short-range Ising model, the nonlocal $\phi^4$ theory,
\begin{equation}\label{eq:LRI}
  S = \int_{\mathbb{R}^2}\phi (-\Delta)^\gamma \phi +g_4\phi^4,
\end{equation}
can encode LRI physics in the IR. If the IR fixed point of this theory is conformally invariant, Corollary~\ref{cor-ctheor} implies that it is an example of a generalized CFT that is nonlocal by our definition. Ref.~\cite{paulos_conformal_2016} demonstrates that LRI fixed point is indeed conformally invariant, albeit without a local stress tensor. Our construction describes the fixed point CFT as one possessing a parameterized family of conformal symmetries captured by a nonlocal stress tensor. There are numerous directions to take a discussion regarding a LRI CFT, especially through the lens of state-dependence arising from nonlocality. We focus on the implication of our theory for the phase diagram.

The foundational work of Fisher et. al. and Sak~\cite{fisher_critical_1972,sak_recursion_1973,sak_1977} has become the standard theory of the LRI model. Counting powers in Eq.~\eqref{eq:LRI}, one finds that for $\gamma>1/2$, the theory flows to the LRI fixed point under the relevant $\phi^4$ interaction. Thus, at $\gamma_{*}:=1/2$ one expects a transition from trivial fractional Gaussian physics to LRI physics. Note, by Lemma~\ref{lem:1}, that this is a local to nonlocal transition. We reiterate here that this claim of locality can be independently tested by computing the UV asymptotics of the entanglement entropy of a free fractional Gaussian fixed point to find that there is no power law dependence on $\gamma$. In other words, the area law holds. Tuning $\gamma$ beyond $\gamma_*$ is believed to eventually drive a \emph{continuous} localization transition where SRI physics is obtained above $\gamma_{**}\leq 1$. The reasoning, again, is based on dimension counting: There exists a $\gamma_{**}$ such that the dimension of $\phi$ of the long range theory, which is fixed by the kinetic term and not anomalous, decreases to match what it would have been in a short range theory.

If primary operator spectra are to be continuously related, one expects the descendants and hence Virasoro algebras to map continuously as well. As we have discussed, the simplest fractional Virasoro algebra one can envision is of the form $\mathcal H\otimes \text{Vir}_c$ (which is not representative of the LRI CFT). Only in this case can one take a branched covering and relate the two Virasoro algebras continuously. Thus, the continuity of the transition must be weakened to the level of correlators, not operator algebras~\cite{Honkonen_1989,behan_long-range_2017,numerics1,blanchard_influence_2013}. While the nature of this transition has long been contested, the objection against the continuity of the transition that raises the contradiction that the LRI spectrum contains two parity odd primaries, $\phi$ and $\phi^3$, while the SRI hosts only a single relevant parity odd primary, $\phi$, from which $\phi^3$ descends appears to be consistent with our heuristic claim. We refer to~\cite{behan_long-range_2017,behan_scaling_2017} for further elaboration on this problem and their proposed resolution. 

From our perspective, the former transition can be understood from simple perturbation theory: The fractional Gaussian fixed point has the physical content of a local theory with the operator algebra expressed in a peculiar basis. This is not the case at $\gamma=\gamma_{**}$, where there is presumed to be a transition from LRI to SRI physics. We would expect such a crossover to be state-dependent with the operator spectra being related by non-constant $\mathcal H$ functions. Assuming the LRI fixed point is characterized by a nonlocal Virasoro algebra, this transition cannot be continuous in the traditional OPE sense and \emph{will} lead to contradictions under standard theoretical prescriptions. 


\section{Final Remarks}
The construction of a CFT arising from the fractional Virasoro algebra points one in the unfamiliar direction of a generally non-constant central charge, one that depends on the state.  In particular, the notion of locality is refined to allow for partition functions preserving field redefinitions, placing our work relying on entanglement entropy scaling arguments in~\cite{basa_classification_2020} on firm mathematical footing. Furthermore, while so far forbiddingly unwieldy to work with explicitly, we establish that perturbative RG techniques, guided by a degree dependent refinement of the C-theorem, applied to nonlocal field theories can yield examples of such nonlocal fractional CFTs. The landscape of theories changes remarkably when conformal fixed points are allowed to be state-dependent. This state-dependence is assumed to be controlled by higher dimensional moduli in a manner consistent with the CS extension theorem.

Our work here motivates a shift in thinking from specific nonlocal theories to the classifying spaces of such theories that, as a scheme of conformal perturbation theory, encodes correctly the parameterized family of conformal symmetries of the UV and IR CFTs implied by a nonlocal Virasoro algebra. There is a rather striking analogy to be made here.  Namely,  physics of fundamental degrees of freedom is described by the embedding of string world sheets in some geometry UV completes a local QFT by giving it extended, non-singular structure that survives at short distances. Curiously, the corresponding UV dynamics is best understood not through any particular worldsheet but rather the moduli space of complex structures which, geometrically, is rather similar in spirit to Sec.~\ref{sec:geo}. It is perhaps meaningful to interpret a nonlocal QFT arising from the fractional Laplacian as an effective UV completion given by extending world lines not to sheets but to $1+\gamma$ dimensional objects which not only provides justification to invoking rather sophisticated mathematics to describe theories that might initially seem within the grasp of conventional methods of C/QFT but also positions us to make contact with the fractality of quantum chaotic processes and strongly correlated quantum criticality~\cite{fractalfs}. In the former case, the connection with quantum chaos should build on generalizations of the Cantor sets in the context of von Neumann algebras. It may be possible to explore these avenues by constructing sigma models where the base space is a metric measure space in which the worldline embeds as a $1+\gamma$ dimensional object.

 Going beyond this analogy, it would be interesting to interpret our fractional CFTs as those on a (fractional) string worldsheet and perhaps make physical sense of nonlocal BRST quantization. In this context, the connection between the nonlocal Virasoro algebra and the spaces that realize their implied symmetry, which we expect to be a stacky geometric structure, may be related to the question of symmetries in bulk quantum gravity~\cite{harlow_symmetries_2019}. Building on the corresponding geometric interpretation and its implications for the standard string worldsheet and generalizations of these ideas to other vertex algebras will be the focus of future work.


%



\paragraph{Funding information}
P.W.P. thanks DMR19-19143 for partial funding of this project.




\bibliographystyle{SciPost_bibstyle} 

\begin{thebibliography}{34}%
    %
  
  \bibitem{AQFT1}
  Baker, D.J., Halvorson, H., Swanson, N.: {The Conventionality of
    Parastatistics}.
  The British Journal for the Philosophy of Science \textbf{66}(4),
    929--976 (2014).
  
  \bibitem{nl_extra_1}
  Barvinsky, A.O.: Aspects of nonlocality in quantum field theory, quantum
    gravity and cosmology.
  Modern Physics Letters A \textbf{30}(03n04), 1540003 (2015).
  
  \bibitem{basa_classification_2020}
  Basa, B., La~Nave, G., Phillips, P.W.: Classification of nonlocal actions:
    {Area} versus volume entanglement entropy.
  Physical Review D \textbf{101}(10), 106006 (2020).
  
  \bibitem{behan_long-range_2017}
  Behan, C., Rastelli, L., Rychkov, S., Zan, B.: Long-{Range} {Critical}
    {Exponents} near the {Short}-{Range} {Crossover}.
  Physical Review Letters \textbf{118}(24), 241601 (2017).
  
  \bibitem{behan_scaling_2017}
  Behan, C., Rastelli, L., Rychkov, S., Zan, B.: A scaling theory for the
    long-range to short-range crossover and an infrared duality.
  Journal of Physics A: Mathematical and Theoretical \textbf{50}(35),
    354002 (2017).
  
  \bibitem{blanchard_influence_2013}
  Blanchard, T., Picco, M., Rajabpour, M.A.: Influence of long-range interactions
    on the critical behavior of the {Ising} model.
  EPL (Europhysics Letters) \textbf{101}(5), 56003 (2013).
  Publisher: IOP Publishing
  
  \bibitem{caffarelli_extension_2007}
  Caffarelli, L., Silvestre, L.: An {Extension} {Problem} {Related} to the
    {Fractional} {Laplacian}.
  Communications in Partial Differential Equations \textbf{32}(8),
    1245--1260 (2007).
  
  \bibitem{caffarelli_uniform_2011}
  Caffarelli, L., Valdinoci, E.: Uniform estimates and limiting arguments for
    nonlocal minimal surfaces.
  Calculus of Variations and Partial Differential Equations
    \textbf{41}(1-2), 203--240 (2011).
  
  \bibitem{chang}
  Chang, S.Y.A., Gonz{\'a}lez, M.d.M.: Fractional {L}aplacian in conformal
    geometry.
  Adv. Math. \textbf{226}(2), 1410--1432 (2011).
  
  \bibitem{dong_generalized_1993}
  Dong, C., Lepowsky, J.: Generalized {Vertex} {Algebras} and {Relative} {Vertex}
    {Operators}.
  Birkhäuser Boston, Boston, MA (1993).
  
  \bibitem{nl_extra_2}
  Donoghue, J.F., Ivanov, M.M., Shkerin, A.: Epfl lectures on general relativity
    as a quantum field theory  (2017)
  
  \bibitem{fisher_critical_1972}
  Fisher, M.E., Ma, S.k., Nickel, B.G.: Critical {Exponents} for {Long}-{Range}
    {Interactions}.
  Physical Review Letters \textbf{29}(14), 917--920 (1972).
  
  \bibitem{quant_fractional}
  Frassino, A.M., Panella, O.: Quantization of nonlocal fractional field theories
    via the extension problem.
  Phys. Rev. D \textbf{100}, 116008 (2019).
  
  \bibitem{frenkel}
  Frenkel, E.: Vertex {Algebras} and {Algebraic} {Curves}.
  arXiv:math/0007054  (2001).
  ArXiv: math/0007054
  
  \bibitem{ghoshal2020nonlocal}
  Ghoshal, A., Mazumdar, A., Okada, N., Villalba, D.: Non-local non-abelian gauge
    theory: Conformal invariance and $\beta$-function (2020)
  
  \bibitem{graham}
  Graham, C.R., Zworski, M.: Scattering matrix in conformal geometry.
  Invent. Math. \textbf{15}, 89--118 (2003)
  
  \bibitem{gz}
  Graham, C.R., Zworski, M.: Scattering matrix in conformal geometry.
  Inventiones mathematicae \textbf{152}(1), 89--118 (2003).
  
  \bibitem{gross_line_2017}
  Gross, D.J., Rosenhaus, V.: A line of {CFTs}: from generalized free fields to
    {SYK}.
  Journal of High Energy Physics \textbf{2017}(7), 86 (2017).
  
  \bibitem{gubser_nonlocal_2019}
  Gubser, S.S., Jepsen, C.B., Ji, Z., Trundy, B., Yarom, A.: Non-local non-linear
    sigma models.
  Journal of High Energy Physics \textbf{2019}(9), 5 (2019).
  
  \bibitem{gubser1}
  {Gubser}, S.S., {Klebanov}, I.R., {Polyakov}, A.M.: {Gauge theory correlators
    from non-critical string theory}.
  Physics Letters B \textbf{428}, 105--114 (1998).
  
  \bibitem{harlow_symmetries_2019}
  Harlow, D., Ooguri, H.: Symmetries in quantum field theory and quantum gravity.
  arXiv:1810.05338 [gr-qc, physics:hep-lat, physics:hep-ph,
    physics:hep-th]  (2019).
  ArXiv: 1810.05338
  
  \bibitem{heydeman_renormalization_2020}
  Heydeman, M., Jepsen, C.B., Ji, Z., Yarom, A.: Renormalization and conformal
    invariance of non-local quantum electrodynamics.
  Journal of High Energy Physics \textbf{2020}(8), 7 (2020).
  
  \bibitem{Honkonen_1989}
  Honkonen, J., Nalimov, M.Y.: Crossover between field theories with short-range
    and long-range exchange or correlations.
  Journal of Physics A: Mathematical and General \textbf{22}(6),
    751--763 (1989).
  
  \bibitem{fractalfs}
  Kr\"uger, F., Zaanen, J.: Fermionic quantum criticality and the fractal nodal
    surface.
  Phys. Rev. B \textbf{78}, 035104 (2008).
  
  \bibitem{AQFT2}
  Kuckert, B.: A new approach to spin \& statistics.
  Letters in Mathematical Physics \textbf{35}(4), 319--331 (1995).
  
  \bibitem{gl2}
  La~Nave, G., Phillips, P.W.: Geodesically complete metrics and boundary
    non-locality in holography: Consequences for the entanglement entropy.
  Phys. Rev. D \textbf{94}, 126018 (2016).
  
  \bibitem{la_nave_fractional_2019}
  La~Nave, G., Phillips, P.W.: Fractional {Virasoro} algebras.
  Advances in Theoretical and Mathematical Physics \textbf{23}(6),
    1631--1655 (2019).
  
  \bibitem{LT}
  Li, W., Takayanagi, T.:  Holography and Entanglement in Flat Spacetime.
  Phys. Rev. Lett. \textbf{106}, 141301,1631--1655 (2019).
  
  
  \bibitem{numerics1}
  Luijten, E., Bl\"ote, H.W.J.: Boundary between long-range and short-range
    critical behavior in systems with algebraic interactions.
  Phys. Rev. Lett. \textbf{89}, 025703 (2002).
  
  \bibitem{unitary}
  Marino, E.C., Nascimento, L.O., Alves, V.S., Morais~Smith, C.: Unitarity of
    theories containing fractional powers of the d'alembertian operator.
  Phys. Rev. D \textbf{90}, 105003 (2014).
  
  \bibitem{Mintchev_2001}
  Mintchev, M.: Local fields on the brane induced by nonlocal fields in the bulk.
  Classical and Quantum Gravity \textbf{18}(22), 4801--4812 (2001).
  
  \bibitem{moffat_ultraviolet_2019}
  Moffat, J.W.: Ultraviolet complete quantum field theory and particle model.
  The European Physical Journal Plus \textbf{134}(9), 443 (2019).
  
  \bibitem{noyvert_algebraic_2007}
  Noyvert, B.: Algebraic approach to parafermionic conformal field theories.
  Journal of High Energy Physics \textbf{2007}(02), 074--074 (2007).
  
  \bibitem{Raju1}
  Papadodimas, K., Raju, S.: State-dependent bulk-boundary maps and black hole
    complementarity.
  Phys. Rev. D \textbf{89}, 086010 (2014).
  
  \bibitem{Raju2}
  Papadodimas, K., Raju, S.: Remarks on the necessity and implications of state-dependence in the black hole interior. Phys. Rev. D \textbf{93}(8), 084049 (2016).
  
  \bibitem{paulos_conformal_2016}
  Paulos, M.F., Rychkov, S., van Rees, B.C., Zan, B.: Conformal invariance in the
    long-range {Ising} model.
  Nuclear Physics B \textbf{902}, 246--291 (2016).
  
  \bibitem{rajabpour_conformal_2011}
  Rajabpour, M.A.: Conformal symmetry in non-local field theories.
  Journal of High Energy Physics \textbf{2011}(6), 76 (2011).
  
  \bibitem{sak_recursion_1973}
  Sak, J.: Recursion {Relations} and {Fixed} {Points} for {Ferromagnets} with
    {Long}-{Range} {Interactions}.
  Physical Review B \textbf{8}(1), 281--285 (1973).
  
  \bibitem{sak_1977}
  Sak, J.: Low-temperature renormalization group for ferromagnets with long-range
    interactions.
  Physical Review B \textbf{15}(9), 4344--4347 (1977).
  
  \bibitem{trinchero_wilson-fisher_2019}
  Trinchero, R.: Wilson-{Fisher} fixed points for any dimension.
  Physical Review D \textbf{100}(11), 116004 (2019).
  
  \bibitem{Witten1998}
  {Witten}, E.: {Anti-de Sitter space and holography}.
  Advances in Theoretical and Mathematical Physics \textbf{2}, 253--291
    (1998)
  
  \bibitem{zamo}
  Zamolodchikov, A.: Irreversibility of the flux of the renormalization group in
    a 2d field theory.
  Jetp Letters \textbf{43}, 565--567 (1986)
  
  \bibitem{zamolodchikov_nonlocal_1985}
  Zamolodchikov, A.B., Fateev, V.A.: Nonlocal (parafermion) currents in
    two-dimensional conformal quantum field theory and self-dual critical points
    {inZ},-symmetric statistical systems p.~11 (1985)
  
  \end{thebibliography}

\end{document}